% ****** Start of file apssamp.tex ******
%
%   This file is part of the APS files in the REVTeX 4.1 distribution.
%   Version 4.1r of REVTeX, August 2010
%
%   Copyright (c) 2009, 2010 The American Physical Society.
%
%   See the REVTeX 4 README file for restrictions and more information.
%
% TeX'ing this file requires that you have AMS-LaTeX 2.0 installed
% as well as the rest of the prerequisites for REVTeX 4.1
%
% See the REVTeX 4 README file
% It also requires running BibTeX. The commands are as follows:
%
%  1)  latex apssamp.tex
%  2)  bibtex apssamp
%  3)  latex apssamp.tex
%  4)  latex apssamp.tex
%
\documentclass[%
 reprint,
 amsmath,amssymb,
 aps,
 pra
]{revtex4-1}

\usepackage{graphicx}% Include figure files
\usepackage{dcolumn}% Align table columns on decimal point
\usepackage{bm}% bold math
\usepackage{hyperref}% add hypertext capabilities
\usepackage{amsthm}
\usepackage{qcircuit}
\usepackage{graphicx}
\usepackage{tikz}
\usetikzlibrary{decorations.pathreplacing}
\usepackage{bbm}

\usepackage{colortbl}
\definecolor{mygreen}{RGB}{175,233,198}%inputs
\definecolor{mypurple}{RGB}{239,229,247}%algos
\definecolor{myyellow}{RGB}{255,238,170}%new output
\definecolor{mygray}{RGB}{240,238,228}
\definecolor{mydarkred}{RGB}{240,0,0}
\definecolor{mylightblue}{RGB}{176,224,230}
\definecolor{mylightred}{RGB}{255,204,153}
\definecolor{mylightorange}{RGB}{255,153,153}
\definecolor{mylightgray}{RGB}{224,224,224}

\newcommand{\ket}[1]{| #1 \rangle} % for Dirac bras
\newcommand{\bra}[1]{\langle #1 |} % for Dirac kets
\newcommand{\braket}[2]{\langle #1 \vphantom{#2} |  #2 \vphantom{#1} \rangle} % for Dirac brackets
\newcommand{\ketbra}[2]{|#1\rangle\langle#2|}
\newcommand{\e}{\mathrm{e}}
\newcommand{\x}{\mathbf{x}}

\newtheorem{definition}{Definition}
\newtheorem{theorem}{Theorem}

\newtheorem{prop}{Proposition}

\begin{document}

\title{Quantum machine learning in feature Hilbert spaces}

\author{Maria Schuld} \email{maria@xanadu.ai}

 \affiliation{Xanadu, 372 Richmond St W, Toronto, M5V 2L7, Canada}
\author{Nathan Killoran}
 \affiliation{Xanadu, 372 Richmond St W, Toronto, M5V 2L7, Canada}

\date{\today}

\begin{abstract}
The basic idea of quantum computing is surprisingly similar to that of kernel methods in machine learning, namely to efficiently perform computations in an intractably large Hilbert space. In this paper we explore some theoretical foundations of this link and show how it opens up a new avenue for the design of quantum machine learning algorithms. We interpret the process of encoding inputs in a quantum state as a nonlinear feature map that maps data to quantum Hilbert space. A quantum computer can now analyse the input data in this feature space. Based on this link, we discuss two approaches for building a quantum model for classification. In the first approach, the quantum device estimates inner products of quantum states to compute a classically intractable kernel. This kernel can be fed into any classical kernel method such as a support vector machine. In the second approach, we can use a variational quantum circuit as a linear model that classifies data explicitly in Hilbert space. We illustrate these ideas with a feature map based on squeezing in a continuous-variable system, and visualise the working principle with $2$-dimensional mini-benchmark datasets. 
\end{abstract}

%\pacs{Valid PACS appear here}
\keywords{Quantum machine learning, quantum algorithms, kernel methods, feature maps}

\maketitle
\section{Introduction}

The goal of many quantum algorithms is to perform efficient computations in a Hilbert space that grows rapidly with the size of a quantum system. `Efficient' means that the number of operations applied to the system grows at most polynomially with the system size. An illustration is the famous quantum Fourier transform applied to an $n$-qubit system, which uses $\mathcal{O} (\text{poly} (n))$ operations to perform a discrete Fourier transform on $2^n$ amplitudes. In continuous-variable systems this is pushed to the extreme, as a single operation -- for example, squeezing -- applied to a mode  formally manipulates a quantum state in an infinite-dimensional Hilbert space. In this sense, quantum computing can be understood as a technique to perform ``implicit'' computations in an intractably large Hilbert space through the efficient manipulation of a quantum system. \\

In machine learning, so-called \textit{kernel methods} are a well-established field with a surprisingly similar logic. In a nutshell, the idea of kernel methods is to formally embed data into a higher- (and sometimes infinite-) dimensional \textit{feature space} in which it becomes easier to analyse (see 
Figure \ref{Fig:linear}). A popular example is a support vector machine that draws a decision boundary between two classes of datapoints  by mapping the data into a feature space where it becomes linearly separable. The trick is that the algorithm never explicitly performs computations with vectors in feature space, but uses a so-called \textit{kernel function} that is defined on the domain of the original input data. Just like quantum computing, kernel methods therefore perform implicit computations in a possibly intractably large Hilbert space through the efficient manipulation of data inputs.\\

Besides this apparent link, kernel methods have been hardly studied in the quantum machine learning literature, a field that (in the definition we employ here) investigates the use of quantum computing as a resource for machine learning. Across the approaches in this young field, which vary from sampling \cite{verdon17, amin15, benedetti16b, low14, wittek17} to quantum optimisation \cite{denchev12,ogorman15}, linear algebra solvers \cite{wiebe12, rebentrost14, schuld16prediction} and using quantum circuits as trainable models for inference \cite{wan17, farhi18}, a lot of attention has been paid to recent trends in machine learning such as deep learning and neural networks. Kernel methods, which were most successful in the 1990s, are only mentioned in a few references \cite{rebentrost14, schuld17ibm}. Besides a single study on the connection between coherent states and Gaussian kernels \cite{chatterjee16}, their potential for quantum computing remains widely unexplored.  \\

The aim of this paper is to investigate the relationship between feature maps, kernel methods and quantum computing. We interpret the process of encoding inputs into a quantum state as a feature map which maps data into a potentially vastly higher-dimensional feature space, the Hilbert space of the quantum system. Data can now be analysed in this `feature Hilbert space', where simple classifiers such as linear models may gain enormous power. Furthermore, it is well known that the inner product of two data inputs that have been mapped into feature space gives rise to a kernel function that measures the distance between the data points. Kernel methods use these kernel functions to create models that have been very successful in pattern recognition. By switching between kernels one effectively switches between different models, which is known as the \textit{kernel trick}. In the quantum case, the kernel trick corresponds to changing the data encoding strategy. \\

These two perspectives, namely of kernels on the one hand and feature spaces one the other hand, naturally lead to two ways of building quantum classifiers for supervised learning. The \textit{implicit approach} takes a classical model that depends on a kernel function, but uses the quantum device to evaluate the kernel, which is computed as the inner products of quantum states in `feature Hilbert space'. The \textit{explicit approach} uses the quantum device to directly learn a linear decision boundary in feature space by optimising a variational quantum circuit.   \\

A central result of this paper is that the idea of embedding data into a quantum Hilbert space opens up a promising avenue to quantum machine learning, in which we can generically use quantum devices for pattern recognition. The implicit and explicit approaches are not only hardware-independent, but also suitable for intermediate-term quantum technologies, which allows us to test them with the generation of quantum computers that is currently being developed. Nonlinear feature maps also circumvent the need to implement nonlinear transformations on amplitude-encoded data, and thereby solve an outstanding problem in quantum machine learning which we will come back to in the conclusion.

\begin{figure}[t]
\centering
\includegraphics[width=0.4\textwidth]{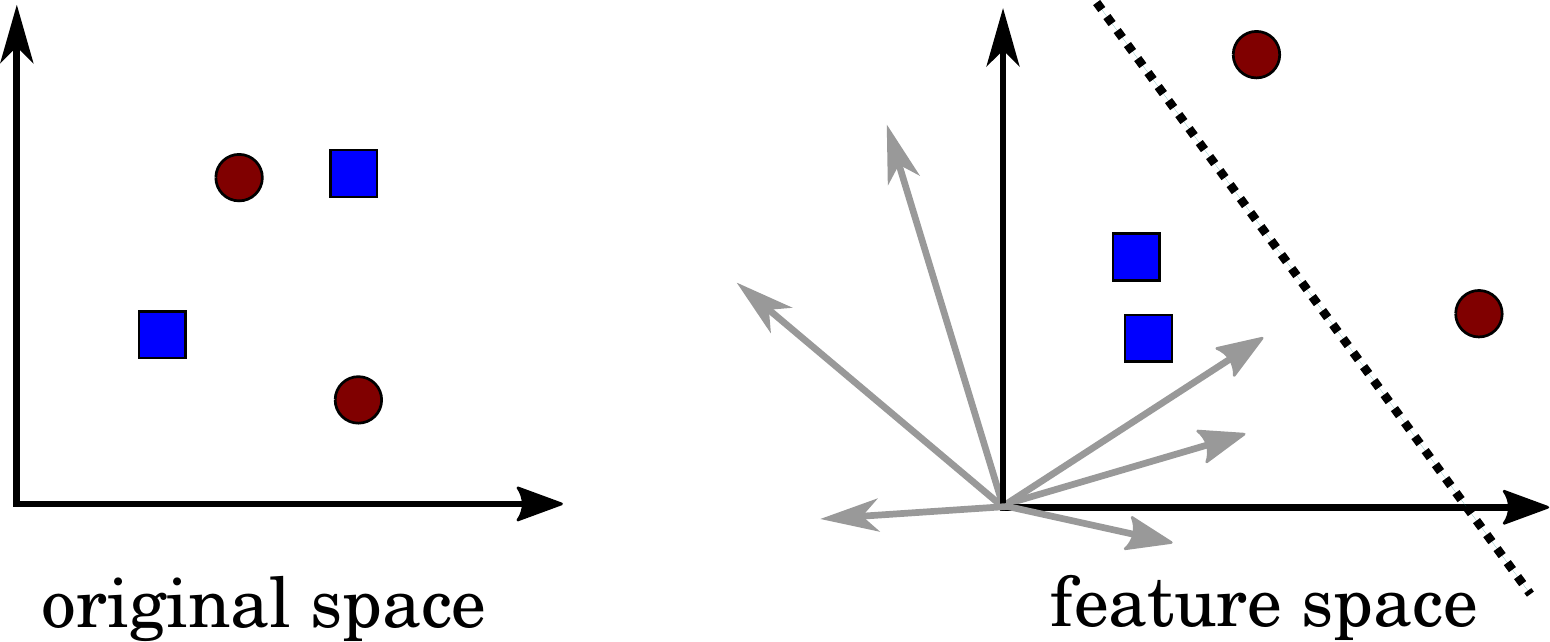}
\caption{While in the original space of training inputs, data from the two classes `blue squares' and `red circles' are not separable by a simple linear model (left), we can map them to a higher dimensional feature space where a linear model is indeed sufficient to define a separating hyperplane that acts as a decision boundary (right).}
\label{Fig:linear}
\end{figure}

\section{Feature maps, kernels and quantum computing}

In machine learning we are typically given a dataset of inputs $\mathcal{D} = \{x^1,...,x^M\}$ from a certain input set $\mathcal{X}$, and have to recognise patterns to evaluate or produce previously unseen data. Kernel methods use a distance measure $\kappa(x,x')$ between any two inputs $x,x' \in \mathcal{X}$ in order to construct models that capture the properties of a data distribution. This distance measure is connected to inner products in a certain space, the \textit{feature space}. Besides many practical applications, the most famous being the support vector machine, these methods have a rich theoretical foundation \cite{scholkopf02} from which we want to highlight some relevant points. 

\subsection{Feature maps and kernels}\label{Sec:kerneltheory}
Let us start with the definition of a feature map.
\begin{definition} \label{Def:featmap} 
Let $\mathcal{F}$ be a Hilbert space, called the {\normalfont feature space}, $\mathcal{X}$ an input set and $x$ a sample from the input set. A {\normalfont feature map} is a map $\phi:\mathcal{X} \rightarrow \mathcal{F}$ from inputs to vectors in the Hilbert space. The vectors $\phi(x) \in \mathcal{F}$ are called {\normalfont feature vectors}.
\end{definition}
\noindent Feature maps play an important role in machine learning, since they map any type of input data into a space with a well-defined metric. This space is usually of much higher dimension. If the feature map is a nonlinear function it changes the relative position between data points (as in the example of Figure \ref{Fig:linear}), and a dataset can become a lot easier to classify in feature space. Feature maps are intimitely connected to kernels \cite{berg84}.
\begin{definition} \label{Def:fkernel} 
Let $\mathcal{X}$ be a nonempty set, called the {\normalfont input set}. A function $\kappa: \mathcal{X} \times \mathcal{X} \rightarrow \mathbb{C} $ is called a {\normalfont kernel} if the Gram matrix $K$ with entries $K_{m,m'} =\kappa(x^m,x^{m'}) $ is positive semidefinite, in other words, if for any finite subset $\{x^1,...,x^M\} \subseteq \mathcal{X}$ with $M \geq 2$ and $c_1,...,c_M \in \mathbb{C}$,
\[ \sum\limits_{m,m'=1}^M c_m c_{m'}^*  \kappa(x^m, x^{m'}) \geq 0 .\]
\end{definition}
\noindent By definition of the inner product, every feature map gives rise to a kernel. 
\begin{theorem} \label{Thm:kernel} 
Let $\phi:\mathcal{X} \rightarrow \mathcal{F}$ be a feature map. The inner product of two inputs mapped to feature space defines a kernel via
\begin{equation} \kappa(x,x') := \langle \phi(x) , \phi(x') \rangle_{\mathcal{F}}, \label{Eq:featmap_kernel} \end{equation}
where $\langle\cdot,\cdot\rangle_{\mathcal{F}}$ is the inner product defined on $\mathcal{F}$.
\end{theorem}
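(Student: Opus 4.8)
The plan is to verify directly that the function $\kappa$ defined in \eqref{Eq:featmap_kernel} meets the positive-semidefiniteness condition of Definition \ref{Def:fkernel}; because a genuine inner product on the Hilbert space $\mathcal{F}$ is at hand, this reduces to a one-line consequence of sesquilinearity together with the positivity of the induced norm. First I would fix an arbitrary finite subset $\{x^1,\dots,x^M\}\subseteq\mathcal{X}$ with $M\ge 2$ and arbitrary scalars $c_1,\dots,c_M\in\mathbb{C}$, and assemble the single feature-space vector $v:=\sum_{m=1}^{M} c_m\,\phi(x^m)\in\mathcal{F}$, which is well defined since $\mathcal{F}$ is a complex vector space and each $\phi(x^m)$ lies in it by Definition \ref{Def:featmap}.

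The second step is to expand $\langle v,v\rangle_{\mathcal{F}}$ and recognise the kernel sum inside it. Using that $\langle\cdot,\cdot\rangle_{\mathcal{F}}$ is linear in one argument and conjugate-linear in the other — pairing the conjugate-linear slot with the coefficients that appear conjugated in Definition \ref{Def:fkernel} — one computes
\begin{align*}
0 \;\le\; \langle v,v\rangle_{\mathcal{F}}
&= \Bigl\langle \textstyle\sum_{m} c_m\,\phi(x^m),\ \sum_{m'} c_{m'}\,\phi(x^{m'}) \Bigr\rangle_{\mathcal{F}} \\
&= \sum_{m,m'=1}^{M} c_m\,c_{m'}^*\,\langle \phi(x^m),\phi(x^{m'})\rangle_{\mathcal{F}} \\
&= \sum_{m,m'=1}^{M} c_m\,c_{m'}^*\,\kappa(x^m,x^{m'}),
\end{align*}
where the opening inequality is simply the positivity of the norm on $\mathcal{F}$. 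Since the subset and the coefficients were arbitrary, the Gram matrix $K$ with $K_{m,m'}=\kappa(x^m,x^{m'})$ is positive semidefinite, so $\kappa$ is a kernel. If one wishes the Hermitian symmetry of $K$ spelled out, it follows from conjugate symmetry of the inner product, $\kappa(x,x') = \overline{\langle \phi(x'),\phi(x)\rangle_{\mathcal{F}}} = \overline{\kappa(x',x)}$, though it is in any case already implied by the quadratic-form condition over $\mathbb{C}$.

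I do not expect any genuine obstacle: the statement is essentially the defining property of an inner product restated in the vocabulary of kernels. The only place that needs care is the bookkeeping of complex conjugation — one must line up the conjugated coefficients $c_{m'}^*$ in Definition \ref{Def:fkernel} with the conjugate-linear argument of $\langle\cdot,\cdot\rangle_{\mathcal{F}}$, since otherwise the double sum would not collapse to the manifestly nonnegative quantity $\langle v,v\rangle_{\mathcal{F}}$. It is also worth noting that only positive semidefiniteness, not strict positivity, is asserted, so no injectivity or non-degeneracy assumption on the feature map $\phi$ is needed; distinct inputs are permitted to share a feature vector.
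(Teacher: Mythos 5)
Your proposal is correct and follows essentially the same route as the paper's own proof: both form the single vector $\sum_m c_m\,\phi(x^m)$ and identify the kernel double sum with its squared norm in $\mathcal{F}$, which is nonnegative. Your extra remarks on lining up the conjugated coefficients with the conjugate-linear slot and on Hermitian symmetry are just careful bookkeeping that the paper leaves implicit.
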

\begin{proof}
We must show that the Gram matrix of this kernel is positive definite. For arbitrary $c_m, c_{m'} \in \mathbb{C}$ and any $\{x^1,...,x^M\} \subseteq \mathcal{X}$ with $M \geq 2$, we find that
\begin{eqnarray*}
 \sum\limits_{m,m' =1}^M c_m c_{m'}^* \kappa(x_m,x_{m'}) 
&=& \langle \sum_m c_m \phi(x_m), \sum_{m'} c_{m'} \phi(x_{m'}) \rangle \\
&=& ||\sum_m c_m \phi(x_m)||^2 \geq  0
\end{eqnarray*}
\end{proof}
The connection between feature maps and kernels means that every feature map corresponds to a distance measure in input space by means of the inner product of feature vectors. It also means that we can compute inner products of vectors mapped to much higher dimensional spaces by computing a kernel function, which may be computationally a lot easier.

\subsection{Reproducing kernel Hilbert spaces}
Kernel theory goes further and defines a unique Hilbert space associated with each kernel, the \textit{reproducing kernel Hilbert space} or RKHS \cite{hofmann08, aronszajn50}. Although rather abstract, this concept is useful in order to understand the significance of kernels for machine learning, as well as their connection to linear models in feature space.
\begin{definition}\label{Def:rkhs} 
Let $\mathcal{X}$ be a non-empty input set and $\mathcal{R}$ a Hilbert space of functions $f: \mathcal{X} \rightarrow \mathbb{C}$ that map inputs to the real numbers. Let $\langle \cdot , \cdot \rangle $ be an inner product defined on $\mathcal{R}$ (which gives rise to a norm via $||f||= \sqrt{\langle f, f \rangle}$). $\mathcal{R}$ is a {\normalfont reproducing kernel Hilbert space} if every point evaluation is a continuous functional $F: f \rightarrow f(x)$ for all $x \in \mathcal{X}$. This is equivalent to the condition that there exists a function $\kappa: \mathcal{X} \times \mathcal{X} \rightarrow \mathbb{C}$ for which
\begin{equation}
\langle f, \kappa(x, \cdot) \rangle = f(x) 
\label{Eq:repro}
\end{equation} 
with $\kappa(x, \cdot) \in \mathcal{R}$ and for all  $f \in \mathcal{H}$, $x\in \mathcal{X}$. 
\end{definition}
\noindent The function $\kappa$ is the unique \textit{reproducing kernel} of $\mathcal{R}$, and Eq. (\ref{Eq:repro}) is the \textit{reproducing property}. 
%This property implies that $\langle \kappa(x, \cdot),  \kappa(x', \cdot)\rangle =  \kappa(x, x')$, which means that the kernel function is equivalent to the inner product of vectors from the RKHS.
Note that a different, but isometrically isomorphic Hilbert space can be derived for a so-called Mercer kernel \cite{mercer09}.\\

\begin{figure}[t]
\begin{tikzpicture}
\path (-3,0) node[fill = white!95!black, rectangle, rounded corners,   draw=black, thick] (f) {feature map};
\path (0,0) node[fill = white!95!black, rectangle, rounded corners,   draw=black, thick] (k) {kernel};
\path (3,0) node[fill = white!95!black, rectangle, rounded corners,   draw=black, thick] (r) {RKHS};
\draw[->, thick] (f) -- node [above] {Thm 1} (k) ;
\draw [<->, thick] (k) -- node [above]{Def 3} (r) ;
\draw[->, thick](f.south east) to [in =220, out=320] node[above]{Thm 2}  (r.south west) ;

\end{tikzpicture}
\caption{Relationships between the concepts of a feature map, kernel and reproducing kernel Hilbert space. 
}
\label{Fig:flow}
\end{figure}
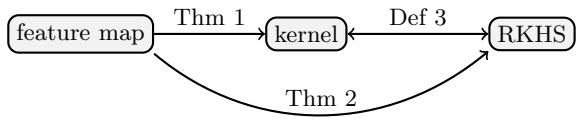

Since a feature map gives rise to a kernel and a kernel gives rise to a reproducing kernel Hilbert space, we can construct a unique reproducing kernel Hilbert space for any given feature map (see Figure \ref{Fig:flow}).
\begin{theorem} \label{Thm:featmap_rkhs} 
Let $\phi:\mathcal{X} \rightarrow \mathcal{F}$ be a feature map over an input set $\mathcal{X}$, giving rise to a complex kernel $\kappa(x,x') = \langle \phi(x) , \phi(x') \rangle_{\mathcal{F}}$. The corresponding reproducing kernel Hilbert space has the form
\begin{multline}\mathcal{R}_{\kappa} = \{f:\mathcal{X}\rightarrow \mathbb{C} | \; \\ f(x) = \langle w, \phi(x) \rangle_{\mathcal{F}}, \; \forall x\in \mathcal{X},w \in \mathcal{F} \} \label{Eq:featmap_rkhs} \end{multline}
\end{theorem}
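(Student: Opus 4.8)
The plan is to construct a concrete Hilbert space of functions on $\mathcal{X}$, verify that it carries $\kappa$ as a reproducing kernel, and then appeal to the uniqueness of the RKHS attached to a given kernel in order to identify it with $\mathcal{R}_{\kappa}$.

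First I would introduce, for every $w \in \mathcal{F}$, the function $f_w:\mathcal{X}\to\mathbb{C}$ defined by $f_w(x) = \langle w, \phi(x)\rangle_{\mathcal{F}}$, and collect these functions into the set $\mathcal{R}$ appearing on the right-hand side of Eq.~(\ref{Eq:featmap_rkhs}). It is convenient to pass to the closed subspace $\mathcal{F}_0 := \overline{\mathrm{span}}\{\phi(x):x\in\mathcal{X}\}\subseteq\mathcal{F}$: writing $w = w_0 + w_\perp$ with $w_0\in\mathcal{F}_0$ and $w_\perp\perp\mathcal{F}_0$, the component $w_\perp$ drops out of $f_w$, so $\mathcal{R} = \{f_w : w\in\mathcal{F}_0\}$. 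Since $f_{w}+f_{v}=f_{w+v}$ and $\alpha f_w = f_{\alpha w}$ (using that the inner product on $\mathcal{F}$ is linear in its first slot, as in the proof of Theorem~\ref{Thm:kernel}), $\mathcal{R}$ is a complex vector space of functions.

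Next I would equip $\mathcal{R}$ with the inner product $\langle f_w, f_v\rangle_{\mathcal{R}} := \langle w, v\rangle_{\mathcal{F}}$ for $w,v\in\mathcal{F}_0$. The step I expect to be the main obstacle is showing this is well defined, i.e. independent of the chosen representatives $w,v$; this reduces to the injectivity of the map $w\mapsto f_w$ on $\mathcal{F}_0$, which holds because $f_w\equiv 0$ means $w$ is orthogonal to every $\phi(x)$ and hence to all of $\mathcal{F}_0$, forcing $w=0$ as $w\in\mathcal{F}_0$. Granting well-definedness, $w\mapsto f_w$ is an isometric isomorphism of $\mathcal{F}_0$ onto $\mathcal{R}$, so $\mathcal{R}$ inherits completeness from the closed (hence complete) subspace $\mathcal{F}_0$ and is therefore a Hilbert space of functions $\mathcal{X}\to\mathbb{C}$.

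Finally I would verify the reproducing property. Because $f_{\phi(x)}(x') = \langle\phi(x),\phi(x')\rangle_{\mathcal{F}} = \kappa(x,x')$ and $\phi(x)\in\mathcal{F}_0$, we have $\kappa(x,\cdot)=f_{\phi(x)}\in\mathcal{R}$, and for any $f_w\in\mathcal{R}$ one computes $\langle f_w, \kappa(x,\cdot)\rangle_{\mathcal{R}} = \langle f_w, f_{\phi(x)}\rangle_{\mathcal{R}} = \langle w,\phi(x)\rangle_{\mathcal{F}} = f_w(x)$, which is precisely Eq.~(\ref{Eq:repro}); in particular $|f(x)|\le\sqrt{\kappa(x,x)}\,\|f\|_{\mathcal{R}}$ for all $f\in\mathcal{R}$, so point evaluations are continuous functionals and $\mathcal{R}$ is an RKHS with reproducing kernel $\kappa$. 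Since the reproducing kernel Hilbert space associated with a given kernel is unique, $\mathcal{R}$ must coincide with $\mathcal{R}_{\kappa}$, which establishes Eq.~(\ref{Eq:featmap_rkhs}).
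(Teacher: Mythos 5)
Your proposal is correct, and it is worth noting that the paper itself gives no proof of this theorem --- it is quoted as a standard fact of kernel theory --- so your argument supplies the missing construction rather than paralleling one in the text. What you do is the classical route: restrict to $\mathcal{F}_0=\overline{\mathrm{span}}\{\phi(x)\}$ so that $w\mapsto f_w$ becomes injective, transfer the inner product by $\langle f_w,f_v\rangle_{\mathcal{R}}:=\langle w,v\rangle_{\mathcal{F}}$, check $\kappa(x,\cdot)=f_{\phi(x)}$ and the reproducing property, obtain continuity of point evaluations from Cauchy--Schwarz via $|f(x)|\le\sqrt{\kappa(x,x)}\,\|f\|_{\mathcal{R}}$, and conclude by uniqueness of the RKHS attached to a kernel. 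All the steps you flag as delicate are handled properly: well-definedness of the inner product does reduce to injectivity on $\mathcal{F}_0$, and completeness is inherited through the isometry with the closed subspace $\mathcal{F}_0$. Two small remarks. First, the uniqueness you invoke in the last step is the Moore--Aronszajn direction (a given kernel has exactly one RKHS), which is not the uniqueness stated after Definition \ref{Def:rkhs} in the paper (a given RKHS has exactly one reproducing kernel); it is standard, but you are using an external fact, so it deserves an explicit reference. Second, your construction actually sharpens the theorem as stated: Eq.~(\ref{Eq:featmap_rkhs}) only identifies $\mathcal{R}_{\kappa}$ as a set of functions, whereas your isometry also pins down its inner product, namely $\|f\|_{\mathcal{R}_\kappa}=\min\{\|w\|_{\mathcal{F}} : f=\langle w,\phi(\cdot)\rangle_{\mathcal{F}}\}$, which is the content needed for the regularisation term $g(\|f\|)$ in Theorem \ref{Thm:represth} to make sense.
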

The functions $\langle w, \cdot \rangle$ in the RKHS associated with feature map $\phi$ can be interpreted as linear models, for which $w \in \mathcal{F}$ defines a hyperplane in feature space.\\

In machine learning these rather formal concepts gain relevance because of the (no less formal) \textit{representer theorem} \cite{scholkopf01}: 
\begin{theorem}\label{Thm:represth}
Let $\mathcal{X}$ be an input set, $\kappa : \mathcal{X} \times \mathcal{X} \rightarrow \mathbb{R}$ a kernel, $\mathcal{D}$ a data set consisting of data pairs $(x^m,y^m) \in \mathcal{X} \times \mathbb{R}$ and $f: \mathcal{X} \rightarrow \mathbb{R}$ a class of model functions that live in the reproducing kernel Hilbert space $\mathcal{R}_{\kappa}$ of $\kappa$. Furthermore, assume we have a cost function $\mathcal{C}$ that quantifies the quality of a model by comparing predicted outputs $f(x^m)$ with targets $y^m$, and which has a regularisation term of the form $g(||f||)$ where  $g: [0,\infty ) \rightarrow \mathbb{R}$ is a strictly monotonically increasing function. Then any function $f^* \in \mathcal{R}_{\kappa}$ that minimises the cost function $C$ can be written as
\begin{equation}
 f^*(x) = \sum_{m=1}^M \alpha_m  \kappa(x, x^m),
\label{Eq:represth}
\end{equation}
for some parameters $\alpha_m \in \mathbb{R}$.
\end{theorem}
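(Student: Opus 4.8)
The plan is to combine the reproducing property from Definition~\ref{Def:rkhs} with an orthogonal decomposition of the hypothesis space, following the standard argument for the representer theorem. First I would introduce the finite-dimensional subspace
\[ \mathcal{R}_{\parallel} = \mathrm{span}\{\kappa(x^m,\cdot) : m = 1,\dots,M\} \subseteq \mathcal{R}_{\kappa} \]
spanned by the kernel functions centred at the training inputs. Because $\mathcal{R}_{\parallel}$ is finite-dimensional it is a closed subspace of the Hilbert space $\mathcal{R}_{\kappa}$, so the projection theorem applies and every $f \in \mathcal{R}_{\kappa}$ admits a unique decomposition $f = f_{\parallel} + f_{\perp}$ with $f_{\parallel} \in \mathcal{R}_{\parallel}$ and $f_{\perp} \perp \mathcal{R}_{\parallel}$.

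The next step is to show that the data-dependent part of the cost is blind to $f_{\perp}$. Using the reproducing property $\langle f, \kappa(x,\cdot)\rangle = f(x)$, for each training point $x^m$ we have
\[ f(x^m) = \langle f, \kappa(x^m,\cdot)\rangle = \langle f_{\parallel}, \kappa(x^m,\cdot)\rangle + \langle f_{\perp}, \kappa(x^m,\cdot)\rangle = f_{\parallel}(x^m), \]
since $\kappa(x^m,\cdot) \in \mathcal{R}_{\parallel}$ forces $\langle f_{\perp}, \kappa(x^m,\cdot)\rangle = 0$. Hence the predicted outputs $f(x^1),\dots,f(x^M)$ that enter $\mathcal{C}$ through the comparison with the targets $y^m$ depend only on $f_{\parallel}$, not on $f_{\perp}$.

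I would then treat the regularisation term. Orthogonality gives the Pythagorean identity $\|f\|^2 = \|f_{\parallel}\|^2 + \|f_{\perp}\|^2$, so $\|f\| \geq \|f_{\parallel}\|$ with equality if and only if $f_{\perp} = 0$; since $g$ is strictly monotonically increasing, $g(\|f\|) \geq g(\|f_{\parallel}\|)$ with equality if and only if $f_{\perp} = 0$. Combining the two observations, replacing any candidate $f$ by its projection $f_{\parallel}$ leaves the data term of $\mathcal{C}$ unchanged while not increasing the regulariser, hence not increasing the total cost, and strictly decreasing it unless $f_{\perp} = 0$. Therefore any minimiser $f^*$ must satisfy $f^*_{\perp} = 0$, i.e.\ $f^* \in \mathcal{R}_{\parallel}$, which by definition of the span means $f^*(x) = \sum_{m=1}^M \alpha_m \kappa(x^m, x) = \sum_{m=1}^M \alpha_m \kappa(x, x^m)$ for some $\alpha_m \in \mathbb{R}$ (using symmetry of the real reproducing kernel).

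The main point requiring care is the role of the \emph{strict} monotonicity of $g$: dropping it would only let us conclude that \emph{some} minimiser has the stated form, whereas strictness rules out any minimiser with a nonzero orthogonal component, giving the stronger statement that \emph{every} minimiser is of this form. The only other subtlety is the justification of the orthogonal decomposition, which rests on $\mathcal{R}_{\parallel}$ being closed; this is immediate here because it is finite-dimensional. Everything else is a routine application of the reproducing property and the Pythagorean identity.
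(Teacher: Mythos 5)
Your proof is correct: it is the standard orthogonal-projection argument (decompose $f$ against the span of the $\kappa(x^m,\cdot)$, use the reproducing property to show the data term sees only the projection, and strict monotonicity of $g$ to kill the orthogonal part), which is exactly the argument of the generalised representer theorem in the reference \cite{scholkopf01} that the paper cites. The paper itself states Theorem~\ref{Thm:represth} without proof, so there is nothing in the text to diverge from; your treatment, including the remarks on closedness of the finite-dimensional subspace and on why strictness of $g$ yields the statement for \emph{every} minimiser, is complete.
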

The representer theorem implies that for a common family of machine learning optimisation problems over functions in an RKHS $\mathcal{R}$, the solution can be represented as an expansion of kernel functions as in Eq. (\ref{Eq:represth}). Consequently, instead of explicitly optimising over an infinite-dimensional RKHS we can directly start with the implicit ansatz of Eq. (\ref{Eq:represth}) and solve the convex optimisation problem of finding the parameters $\alpha_m$. The combination of Theorem \ref{Thm:featmap_rkhs} and Theorem \ref{Thm:represth} shows another facet of the link of kernels and feature maps. A model that defines a hyperplane in feature space can often be written as a model that depends on kernel evaluations. In Section \ref{Sec:qml} we will translate these two viewpoints into two ways of designing quantum machine learning algorithms.

\subsection{Input encoding as a feature map}\label{Sec:encoding}

The immediate approach to combine quantum mechanics and the theory of kernels is to associate the Hilbert space of a quantum system with a reproducing kernel Hilbert space and find the reproducing kernel of the system. We show in Appendix \ref{App:quantumrkhs} that for Hilbert spaces with discrete bases, as well as for the special `continuous-basis' case of the Hilbert space of coherent states, the reproducing kernel is given by inner products of basis vectors. This insight can lead to interesting results. For example, Chatterjee et al. \cite{chatterjee16} show that the inner product of an optical coherent state can be turned into a \textit{Gaussian kernel} (also called \textit{radial basis function kernel}) which is widely used in machine learning. However, to widen the framework we choose another route here. Instead of asking what kernel is associated with a quantum Hilbert space, we associate a quantum Hilbert space with a feature space and derive a kernel that is given by the inner product of quantum states. As seen in the previous section, this will automatically give rise to an RKHS, and the entire apparatus of kernel theory can be applied.\\

Assume we want to encode some input $x$ from an input set $\mathcal{X}$ into a quantum state that is described by a vector $\ket{\phi(x)}$ and which lives in Hilbert space $\mathcal{F}$. This procedure of `input encoding' fulfills the definition of a feature map $\phi: \mathcal{X} \rightarrow \mathcal{F}$, which we call a \textit{quantum feature map} here. According to Theorem \ref{Thm:kernel} we can derive a kernel $\kappa$ from this feature map via Eq. (\ref{Eq:featmap_kernel}). By virtue of Theorem \ref{Thm:featmap_rkhs}, the kernel is the reproducing kernel of an RKHS $\mathcal{R}_{\kappa}$ as defined in Eq. (\ref{Eq:featmap_rkhs}). The functions in $\mathcal{R}_{\kappa}$ are the inner products of the `feature-mapped' input data and a vector $\ket{w} \in \mathcal{F}$, which defines a linear model 
\begin{equation} f(x;w) = \braket{w}{\phi(x)} \label{Eq:hilbertmodel} \end{equation} Note that we use Dirac brackets $\braket{\cdot}{\cdot}$ instead of the inner product $\langle \cdot,\cdot \rangle$ to signify that we are calculating inner products in a quantum Hilbert space. Finally, the representer theorem \ref{Thm:represth} guarantees that the minimiser $ \min_{w} C(w,\mathcal{D})$ of the empirical risk 
\[C(w, \mathcal{D})  = \sum_{m=1}^M |f(x^m;w)- y^m|^2 + ||f||_{\mathcal{R}_{\kappa}} \] 
can be expressed by Equation (\ref{Eq:represth}). The simple idea of interpreting $x \rightarrow \ket{\phi(x)}$ as a feature map therefore allows us to make use of the rich theory of kernel methods and gives rise to machine learning models whose trained candidates can be expressed by inner products of quantum states. Note that if the state $\ket{\phi(x)}$ has complex amplitudes, we can always construct a real kernel by taking the absolute square of the inner product.

\section{Quantum machine learning in feature Hilbert space}\label{Sec:qml}

Now let us enter the realm of quantum computing and quantum machine learning. We show how to use the ideas of Section \ref{Sec:encoding} to design two types of quantum machine learning algorithms and illustrate both approaches with an example from continuous-variable systems.

\subsection{Feature-encoding circuits}

From the perspective of quantum computing, a quantum feature map $x \rightarrow \ket{\phi(x)}$ corresponds to a state preparation circuit $U_{\phi}(x)$ that acts on a ground or vacuum state $\ket{0...0}$ of a Hilbert space $\mathcal{F}$ as $U_{\phi}(x)\ket{0...0} = \ket{\phi(x)}$. We will call $U_{\phi}(x)$ the \textit{feature-embedding circuit}. The models from Eq. (\ref{Eq:hilbertmodel}) in the reproducing Hilbert space from Definition \ref{Def:fkernel} are inner products between $\ket{\phi(x)}$ and a general quantum state $\ket{w} \in \mathcal{F}$. We therefore consider a second circuit $W$ with $W\ket{0...0} = \ket{w}$, which we call the \textit{model circuit}. The model circuit specifies the hyperplane of a linear model in feature Hilbert space. If the feature state $\ket{\phi(x)}$ is orthogonal to $\ket{w}$, then $x$ lies on the decision boundary, whereas states with a positive [negative] inner product lie on the left [right] side of the hyperplane.  \\

To show some examples of feature-embedding circuits and their associated kernels, let us have a look at popular input encoding techniques in quantum machine learning. \\

\paragraph{Basis encoding.} Many quantum machine learning algorithms assume that the inputs $x$ to the computation are encoded as binary strings represented by a computational basis state of the qubits \cite{wang15, farhi18}. For example, $x=01001$ is represented by the $5$-qubit basis state $\ket{01001}$. The computational basis state corresponds to a standard basis vector $\ket{i}$ (with $i$ being the integer representation of the bitstring) in a $2^n$-dimensional Hilbert space $\mathcal{F}$, and the effect of the feature-embedding circuit is given by
\[U_{\phi}: x \in \{0,1\}^n \rightarrow \ket{i} . \]
This feature map maps each data input to a state from an orthonormal basis and is equivalent to the generic finite-dimensional case discussed in Appendix \ref{App:quantumrkhs}. As shown there, the generic kernel is the Kronecker delta \[\kappa(x,x') = \braket{i}{j} = \delta_{ij},\]
which is a binary similarity measure that is only nonzero for two identical inputs.\\

\paragraph{Amplitude encoding.} Another approach to information encoding is to associate normalised input vectors $\x = (x_0,...,x_{N-1})^T  \in \mathbb{R}^N$ of dimension $N = 2^n$ with the amplitudes of a $n$ qubit state $\ket{\psi_{\x}}$ \cite{wiebe12, schuld17ibm},
\[U_{\phi}: \x \in \mathbb{R}^N \rightarrow  \ket{\psi_{\x}} = \sum\limits_{i=0}^{N-1} x_i \ket{i} . \]
As above, $\ket{i}$ denotes the $i$'th computational basis state. This choice corresponds to the linear kernel,
\[ \kappa(\x,\x') = \braket{\psi_{\x}}{ \psi_{\x'} } = \x^T\x'. \]
 
\paragraph{Copies of quantum states.}  With a slight variation of amplitude encoding we can implement polynomial kernels \cite{rebentrost14}. Taking $d$ copies of an amplitude encoded quantum state, 
\[ U_{\phi}: \x \in \mathbb{R}^N \rightarrow \ket{\psi_{\x}} \otimes  \cdots  \otimes \ket{\psi_{\x}} , \]
corresponds to the kernel
\[ \kappa(\x,\x') = \braket{\psi_{\x}}{\psi_{\x'} }  \cdots  \braket{\psi_{\x}}{\psi_{\x'} } = (\x^T\x')^d. \]

\paragraph{Product encoding.} One can also use a (tensor) product encoding, in which each feature of the input $\x = (x_1,..,x_N)^T \in \mathbb{R}^N$ is encoded in the amplitudes of one separate qubit. An example is to encode $x_i$ as $\ket{\phi(x_i)} = \cos(x_i) \ket{0} + \sin(x_i) \ket{1}$ for $i=1,...,N$\cite{stoudenmire16, guerreschi17}. This corresponds to a feature-embedding circuit with the effect
\[U_{\phi}: \x \in \mathbb{R}^N \rightarrow \begin{pmatrix} \cos x_1 \\  \sin x_1 \end{pmatrix} \otimes \cdots \otimes \begin{pmatrix} \cos x_N \\  \sin x_N \end{pmatrix} \in \mathbb{R}^{2^N}, \]
and implies a cosine kernel,
\[\kappa(\x,\x')= \prod_{i=1}^N \cos (x_i - x'_i).\]

\begin{figure}[t]
\centering
\includegraphics[width=0.3\textwidth]{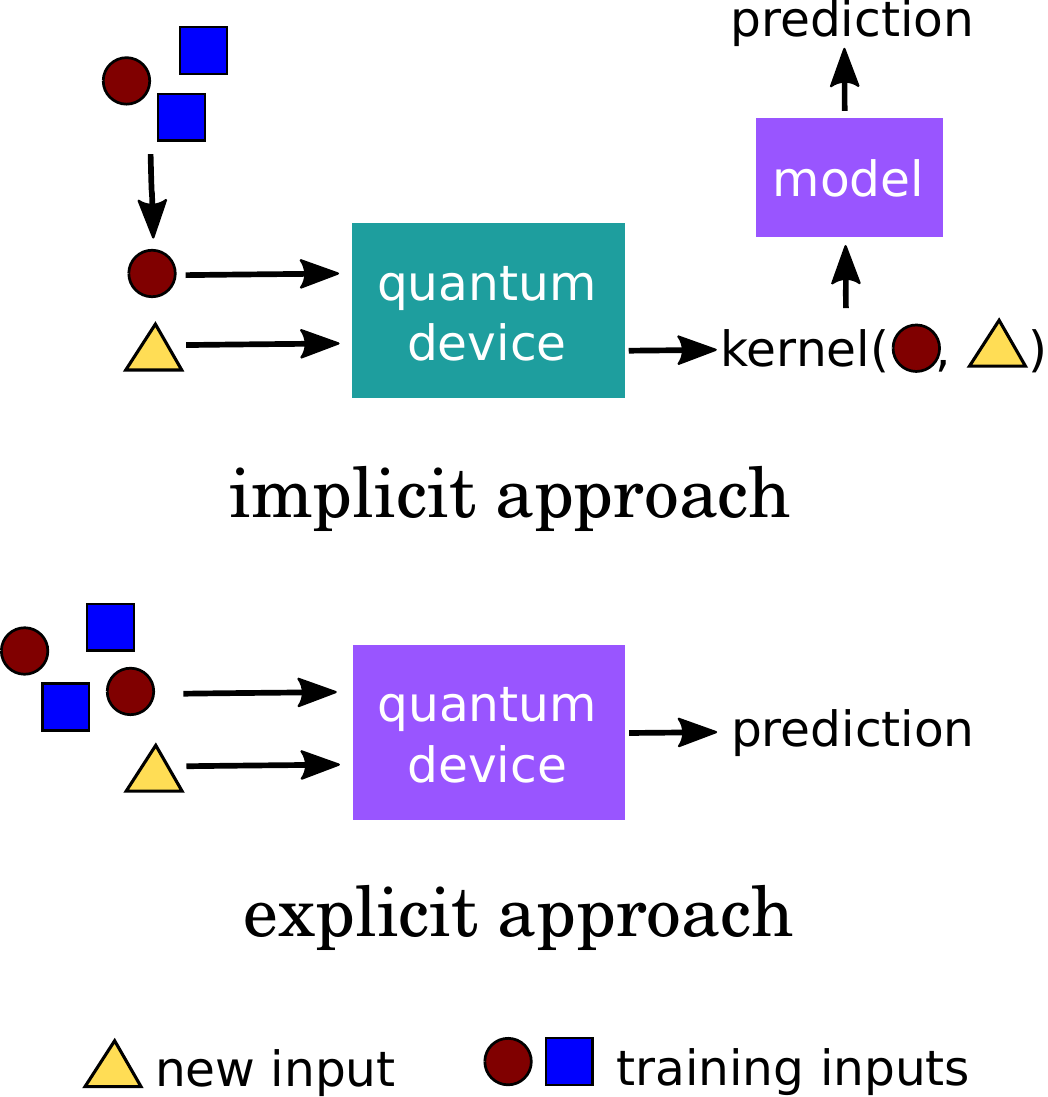}
\caption{Illustration of the two approaches to use quantum feature maps for supervised learning. The implicit approach uses the quantum device to evaluate the kernel function as part of a hybrid or quantum-assisted model which can be trained by classical methods. In the explicit approach, the model is solely computed by the quantum device, which consists of a variational circuit trained by hybrid quantum-classical methods. }
\label{Fig:approaches}
\end{figure}

\subsection{Building a quantum classifier}

Having formulated the ideas from Section \ref{Sec:encoding} in the language of quantum computing, we can identify two different strategies of designing a quantum machine learning algorithm (see Figure \ref{Fig:approaches}). On the one hand, we can use the quantum computer to estimate the inner products $\kappa(x,x') = \braket{\phi(x)}{\phi(x')}$ from a kernel-dependent model as in Eq. (\ref{Eq:represth}), which we call the \textit{implicit approach}, since we use the quantum system to estimate distance measures on input space. This strategy requires a quantum computer that can do two things: to implement $U_{\phi}(x)$ for any $x \in \mathcal{X}$ and to estimate inner products between quantum states (for example using a SWAP test routine). The computation of the model from those kernel estimates, as well as the training algorithm is left to a classical device. This is an excellent strategy in the context of intermediate-term quantum technologies \cite{preskill18}, where we are interested in using a quantum computer only for small routines of limited gate count, and compute as much as possible on the classical hardware. Note that in the long term, quantum computers could also be used to learn the parameters $\alpha_m$ by computing the inverse of the kernel Gram matrix, which has been investigated in Refs. \cite{rebentrost14, schuld16lr}. \\

On the other hand, and as motivated in the introduction, one can bypass the representer theorem and explicitly perform the classification in the `feature Hilbert space' of the quantum system. We call this the \textit{explicit approach}. For example, this can mean to find a $\ket{w}$ that defines a model \ref{Eq:hilbertmodel}. To do so, we can make the model circuit trainable, $W = W(\theta)$, so that quantum-classical hybrid training \cite{mcclean16, guerreschi17} of $\theta$ can learn the optimal model $\ket{w(\theta)} = W(\theta)\ket{0}$. The ansatz for the model circuit's architecture defines the space of possible models and can act as regularisation (see also \cite{stoudenmire16}). Below we will follow a slightly more general strategy and compute a state $W(\theta) U_{\phi} \ket{0...0}$, from which measurements determine the output of the model. Depending on the measurement, this is not necessarily a linear model in feature Hilbert space. We could even go further and include postselection in the model circuit, which might give the classifier in feature Hilbert space even more power. \\

Using quantum computers for learning tasks with these two approaches is desirable in various settings. For example, the implicit approach may be interesting in cases where the quantum device evaluates kernels or models faster in terms of absolute runtime speed. Another interesting example is a setting in which the kernel one wants to use is classically intractable because the runtime grows exponentially or even faster with the input dimension. The explicit approach may be useful when we want to leave the limits of the RKHS framework and construct classifiers directly on Hilbert space. \\

In the remainder of this work we want to explore these two approaches with several examples. We use squeezing in continuous-variable quantum systems as a feature map, for which the Hilbert space $\mathcal{F}$ is an infinite-dimensional Fock space. This constructs a squeezing-based quantum machine learning classifier which can for example be implemented by optical quantum computers. 

\subsection{Squeezing as a feature map} \label{linsep}

\begin{figure}[t]
\centering
\includegraphics[width=0.45\textwidth]{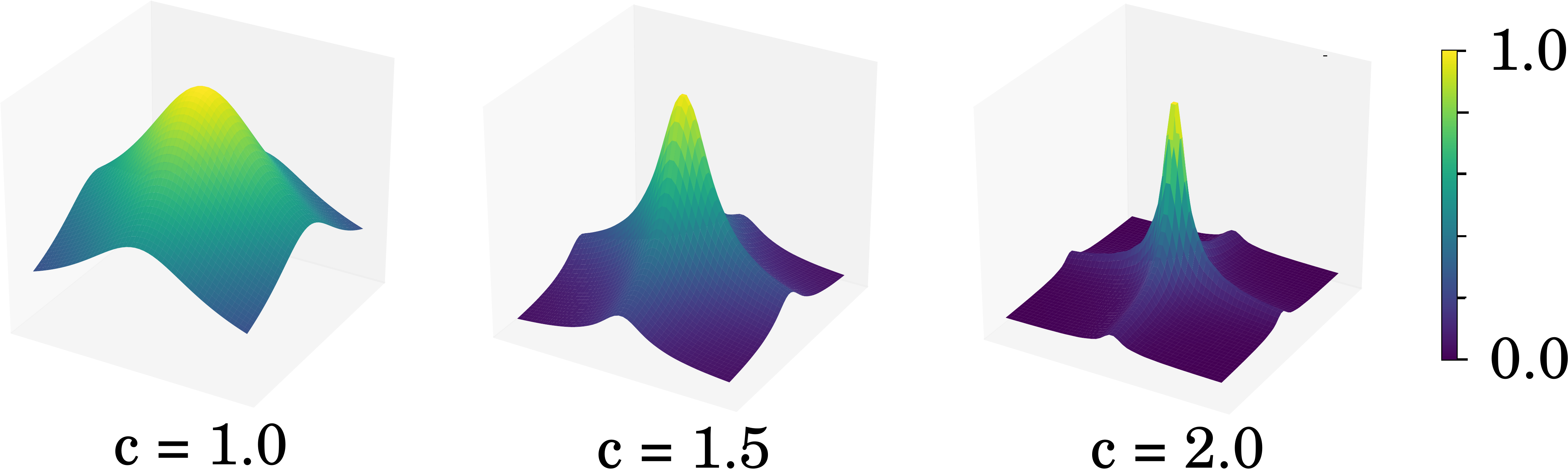}
\caption{Shape of the squeezing kernel function $\kappa_{\mathrm{sq}}(x,x')$ from Equation (\ref{Eq:squeezingkernel}) for different squeezing strength hyperparameters $c$. The input $x$ is fixed at $(0,0)$ and $x'$ is varied. The plots show the interval $[-1,1]$ on both horizontal axes.}
\label{Fig:squeezingkernel}
\end{figure}

A \textit{squeezed vacuum state} is defined as
\[ \ket{z}  = \frac{1}{\sqrt{\cosh(r)}} \sum\limits_{n=0}^{\infty} \frac{\sqrt{(2n)!}}{ 2^n n!} (-\e^{i\varphi} \tanh (r))^n \ket{2n}, \] 
where $\{ \ket{n}\}$ denotes the Fock basis and $z = r \e^{i\varphi} $ is the complex \textit{squeezing factor} with absolute value $r$ and phase $\varphi$. It will be useful to introduce the notation $\ket{z} = \ket{(r,\varphi)}$. 
We can interpret $x \rightarrow \ket{\phi(x)} = \ket{(c, x)}$ as a feature map from a one-dimensional real input space $x \in \mathbb{R}$ to the Hilbert space of Fock states, in short, the \textit{Fock space}. Here, $c$ is a constant hyperparameter that determines the strength of the squeezing, and $x$ is associated with the phase. Moreover, when given multi-dimensional inputs in a dataset of vectors $\x=( x_1, ...,x_N)^T \in \mathbb{R}^N$, we can define the joint state of $N$ squeezed vacuum modes,
\begin{equation} \phi: x \rightarrow \ket{(c, \x)},\label{Eq:squeezingmap} \end{equation}
with
\[\ket{(c, \x)} = \ket{(c, x_1)} \otimes \hdots \otimes \ket{(c, x_N)} \in \mathcal{F}, \]
as a feature map, where $\mathcal{F}$ is now a multimode Fock space. We call this feature map the \textit{squeezing feature map with phase encoding}. \\

\begin{figure}[t]
\centering
\includegraphics[width=0.40\textwidth]{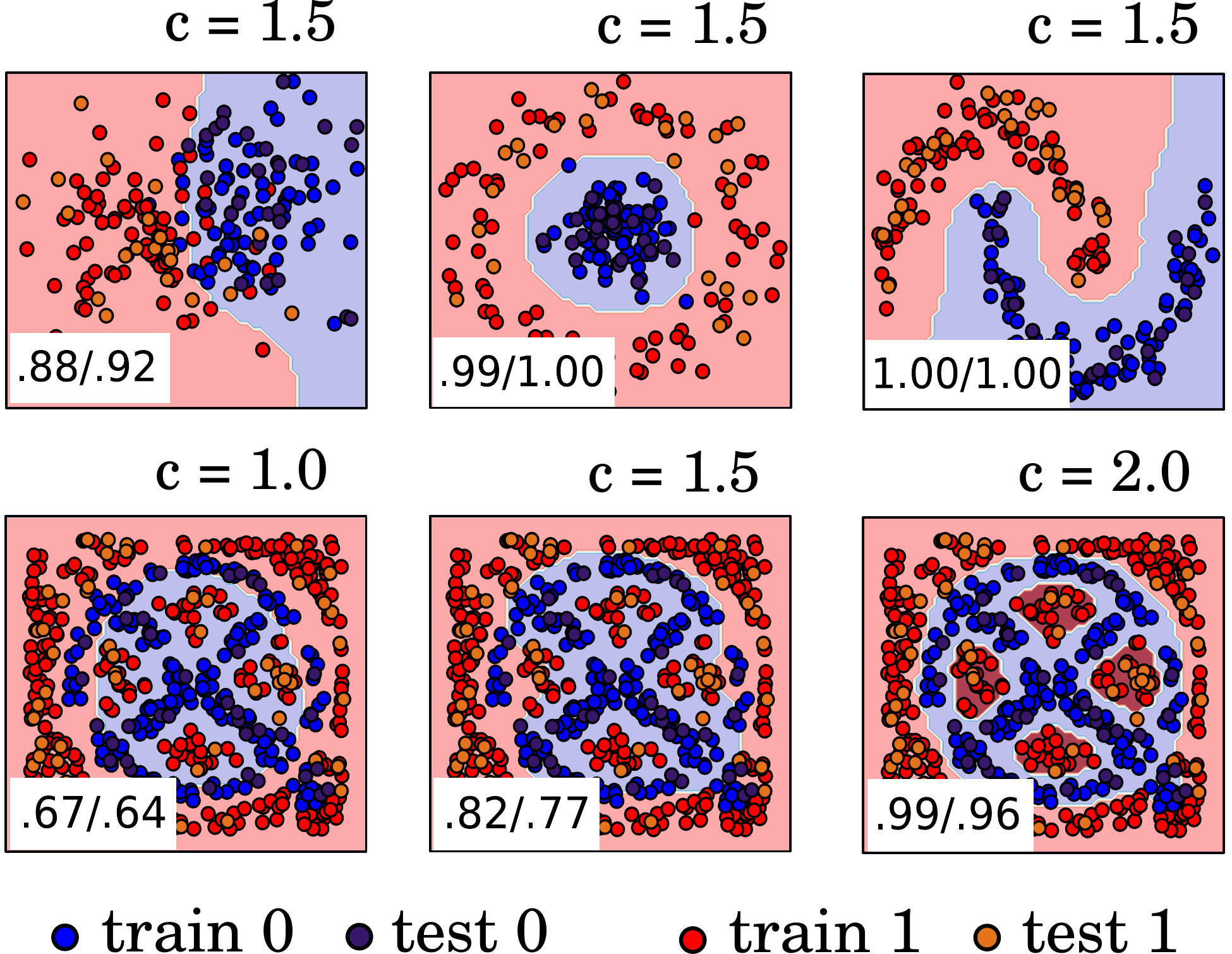}
\caption{Decision boundary of a support vector machine with the custom kernel from Eq. (\ref{Eq:squeezingkernel}). The shaded areas show the decision regions for Class 0 (blue) and Class 1 (red), and each plot shows the rate of correct classifications on the training set/test set. The first row plots three standard $2$-dimensional datasets: `circles', `moons' and `blobs', each with $150$ test and $50$ training samples. The second row illustrates that increasing the squeezing   hyperparameter $c$ changes the classification performance. Here we use a dataset of $500$ training and $100$ test samples. Training was performed with python's \textit{scikit-learn} SVC classifier using a custom kernel which implements the overlap of Eq. (\ref{Eq:squeezingkernel2}).}
\label{Fig:sq_svm}
\end{figure} 

The kernel 
\begin{equation} \kappa(\x,\x'; c) = \prod \limits^N_{i=1} \braket{(c,x_i)}{(c,x_i')}\label{Eq:squeezingkernel} \end{equation}
with
\begin{equation}\braket{(c,x_i)}{(c,x_i')} = \sqrt{\frac{\mathrm{sech}\ c \; \mathrm{sech}\ c}{1-\e^{i(x_i'- x_i ) }\tanh \ c \;\tanh \ c}},\label{Eq:squeezingkernel2} \end{equation}
derived from this feature map \cite{barnett02} is easy to compute on a classical computer. 
It is plotted in Figure \ref{Fig:squeezingkernel}, where we see that the hyperparameter $c$ determines the variance of the kernel function. Note that we can also encode features in the absolute value of the squeezing and define a \textit{squeezing feature map with absolute value encoding}, $\x \rightarrow \ket{\phi(\x)} = \ket{(\x, c)}$. However, in this version we cannot vary the variance of the kernel function, which is why we use the phase encoding in the following investiagtions.

\subsection{An implicit quantum-assisted classifier} \label{linclass}

In the implicit approach, we evaluate the kernel in Eq. (\ref{Eq:squeezingkernel}) with a quantum computer and feed it into a classical kernel method. Instead of using a real quantum device, we exploit the fact that, in the case of squeezing, the kernel can be efficiently computed classically, and use it as a custom kernel in a support vector machine. Figure \ref{Fig:sq_svm} shows that such a model easily learns the decision boundary of $2$-dimensional mini-benchmark datasets. \\

\begin{figure}[t]
\centering
\includegraphics[width=0.40\textwidth]{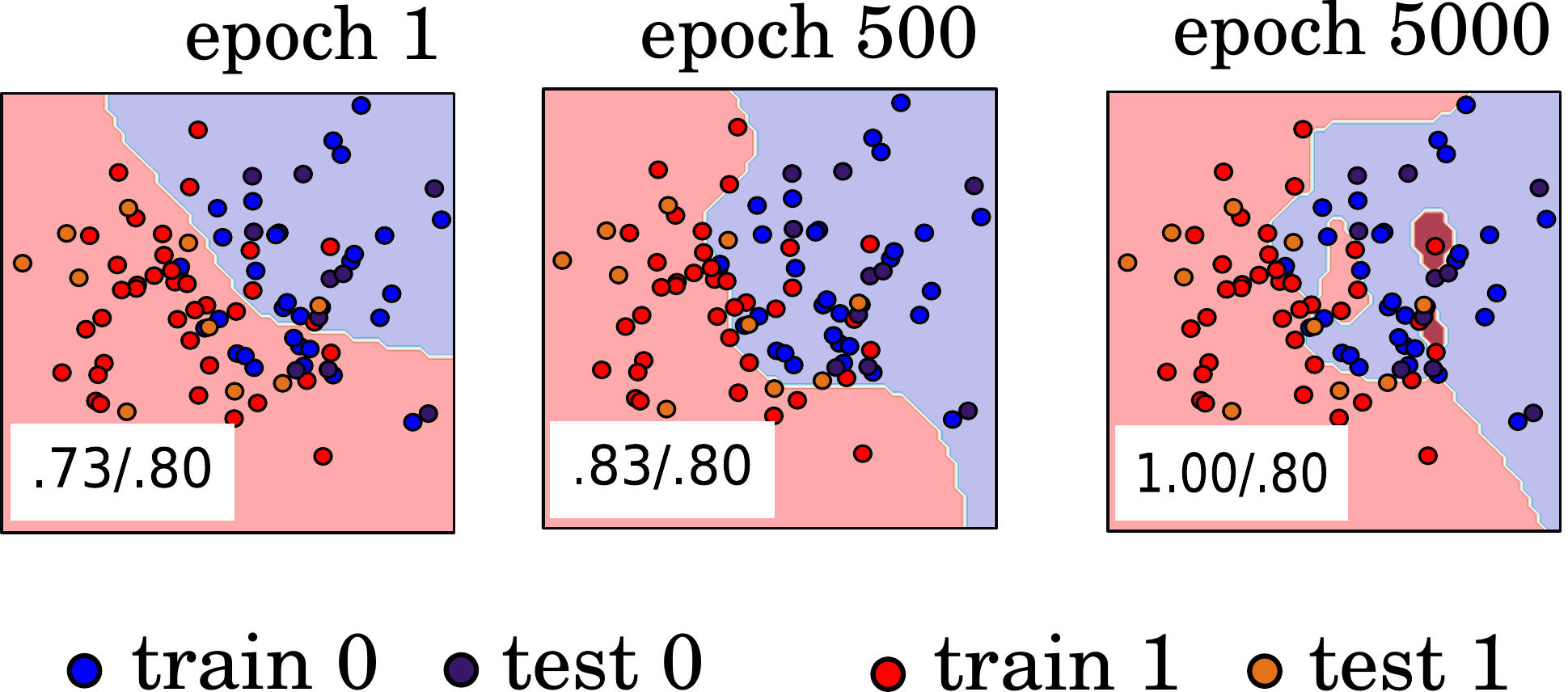}
\caption{Decision boundary of a perceptron classifier in Fock space after mapping the $2$-dimensional data points via the squeezing feature map with phase encoding from Eq. (\ref{Eq:squeezingmap}) (with $c=1.5$). The perceptron only acts on the real subspace and without regularisation. The `blobs' dataset has now only $70$ training and $20$ test samples. The perceptron achieves a training accuracy of $1$ after less than $5000$ epochs, which means that the data is linearly separable in Fock space. Interestingly, in this example the test performance remains exactly the same. The simulations were performed with the Strawberry Fields simulator as well as a scikit-learn out-of-the-box perceptron classifier.}
\label{Fig:sq_perc}
\end{figure} 

Since the idea of a support vector machine is to find the maximum-margin hyperplane in feature space, we want to know whether we can always find a hyperplane for which the training accuracy is $1$. In other words, we ask if the data becomes linearly separable in Fock space by the squeezing feature map. An easy way to do this is to apply a perceptron classifier to the data in feature space. The perceptron is guaranteed to find such a separating hyperplane if it exists. Figure \ref{Fig:sq_perc} shows the performance of a perceptron classifier in the Fock space for the `blobs' data. The data was mapped to this space by the squeezing feature map with phase encoding. As one can see, after $5000$ epochs (runs through the dataset) the decision boundary perfectly fits the training data, achieving an accuracy of $1$. The number of iterations to train the perceptron is known to increase with $\mathcal{O}(1/\gamma^2)$ where $\gamma$ is the margin between the two classes \cite{novikoff63}, and indeed we find in other simulations that the `moons' and `circles' data only take a few epochs until reaching full accuracy. Although the perfect fit to the training data is of course not useful for machine learning (as can be seen by the non-increasing accuracy on the test set) these results are a clue to the fact that the squeezing feature map makes data linearly separable in feature space, a fact that we prove in Appendix \ref{App:linsep}.\\

While the results of the simulations are promising, a goal is to find more sophisticated kernels. Although quantum computers could offer constant speed advantages, they become indispensable if the feature map circuit is classically intractable. However, squeezed states are an example of so-called Gaussian states, and it is well known that Gaussian states (although living in an infinite-dimensional Hilbert space) can be efficiently simulated by a classical computer \cite{bartlett02a}, which we used in the simulations. In order to do something more interesting, one needs non-Gaussian elements to the circuit. For example, one can extend a standard linear optical network of beamsplitters by a cubic phase gate \cite{gottesman01, lloyd99} or use photon number measurements \cite{bartlett02b}. To this end, let $V_{\phi}(\x)$ be a non-Gaussian feature map circuit, i.e. a quantum algorithm that takes a vacuum state and prepares an $\x$-dependent non-Gaussian state. The kernel
\[ \kappa(\x, \x') = \bra{0...0}V^{\dagger}_{\phi}(\x)V_{\phi}(\x')\ket{0...0}  \]
can in general not be simulated by a classical computer any more. It is therefore an interesting open question what type of feature map circuits $V_{\phi}$ are classically intractable, but at the same time lead to powerful kernels for classical models such as support vector machines.

\begin{figure}[t]
\centering
\begin{flushleft} 
a.)
\end{flushleft}
\begin{tikzpicture}[every node/.style={inner sep=0.8,outer sep=0.}]

\draw [decorate,decoration={brace,amplitude=5pt}]
(-2,0.5) -- (-0.1,0.5) node [black,align=center,midway,yshift=0.8cm] 
{feature map \\ circuit};
\draw [decorate,decoration={brace,amplitude=5pt},xshift=0pt,yshift=0pt]
(0.1,0.5) -- (2,0.5) node [black,midway,align=center,yshift=0.8cm] 
{model \\ circuit $W(\theta)$};

\path (-2,-0.) node[draw, shape=circle, anchor = west] (zz1) {$x_1$};
\path (-2,-0.8) node[draw, shape=circle, anchor = west](zz2) {$x_2$};

\path (0,0) node[draw, shape=circle, anchor = center] (ii0) {\phantom{$h$}};
\path (0,-0.5) node[draw, shape=circle, anchor = center](ii1) {\phantom{$h$}};
\path (0,-1) node[draw, shape=circle, anchor = center] (ii2) {\phantom{$h$}};
\path (0,-1.5) node[draw, shape=circle, anchor = center] (ii3) {\phantom{$h$}};
\path (0,-2) node[shape=circle,anchor = center] (ii4) {$\vdots$};

\path (2,-0.) node[draw, shape=circle, anchor = west] (oo1) {$o_0$};
\path (2,-0.8) node[draw, shape=circle, anchor = west](oo2) {$o_1$};

\path (3,-0.) node[fill = white, anchor = west] (pp1) {$p(y=0)$};
\path (3,-0.8) node[fill = white, anchor = west](pp2) {$p(y=1)$};

\foreach \i in {0,1,2,3,4}{
	\foreach \j in {1,2}{
		\draw (zz\j)--(ii\i);
		}}

\foreach \i in {0,1,2,3,4}{
	\foreach \j in {1,2}{
		\draw (ii\i)--(oo\j);
		}}
		
\draw (oo1)--(2.9,-0.);
\draw (oo2)--(2.9,-0.8);
\draw (oo2)--(2.9,-0.);
\draw (oo1)--(2.9,-0.8);
\path (-2,-1.4) node[ align = center] (is) {$\mathcal{X}$};
\path (0.3,-2.4) node[align = center] (os) {$\mathcal{F}$};
\path (2.5,-1.4) node[align = center] (fs) {$\mathcal{Y}$};

\end{tikzpicture}
\begin{flushleft} 
b.)
\end{flushleft}
$$\qquad
\Qcircuit @C=1em @R=.7em {
\lstick{\ket{(c,x_1)}}   & \multigate{1}{W(\theta)} & \qw  &\measureD{p(n_1)} \\
\lstick{\ket{(c,x_2)}} &  \ghost{W(\theta)} & \qw &\measureD{p(n_2)} \\
}
$$

\begin{flushleft}
c.)
\end{flushleft}
$$
\Qcircuit @C=0.6em @R=.7em {
& \multigate{1}{BS(\theta_1, \theta_2 )} & \gate{D(\theta_3)} & \gate{P(\theta_5)} & \gate{V(\theta_7)} & \qw \\
&  \ghost{BS(\theta_1, \theta_2 )} & \gate{D(\theta_4)} & \gate{P(\theta_6)} & \gate{V(\theta_8)} & \qw\\
}
$$

\caption{a.) Representation of the Fock-space-classifier in the graphical language of quantum neural networks. A vector $(x_1, x_2)^T$ from the input space $\mathcal{X}$ gets mapped into the feature space $\mathcal{F}$ which is the infinite-dimensional $2$-mode Fock space of the quantum system. The model circuit, including photon detection measurement, implements a linear model in feature space and reduces the ``infinite hidden layer'' to two outputs. b.) The model circuit of the explicit classifier described in the text uses only $2$ modes to instantiate this infinite-dimensional hidden layer. The variational circuit $W(\theta)$ consists of repetitions of a gate block. We use the gate block shown in c.) with the beamsplitter (BS), displacement (D), quadratic (P) and cubic phase gates (C) described in the text.  }
\label{Fig:graphicandcircuit}
\end{figure}

\subsection{An explicit quantum classifier}

In the explicit approach defined above, we use a parametrised continuous-variable circuit $W(\theta)$ to build a ``Fock-space'' classifier. For our squeezing example this can be done as follows. We start with two vacuum modes $\ket{0}\otimes \ket{0}$. To classify a data input $\x$, first map the input to a quantum state $\ket{c, \x}  = \ket{c, x_1} \otimes \ket{c, x_2}$ by performing a squeezing operation on each of the  modes. Second, apply the model circuit $W(\theta)$ to $\ket{c, \x}$. Third, interpret the probability $p(n_1, n_2)$ of measuring a certain Fock state $\ket{n_1, n_2}$ as the output of the machine learning model. Since this probability depends on the displacement and squeezing intensity, it is better to define two probabilities, say $p(n_1=2,n_2=0)$ and $p(n_1=0,n_2=2)$, as a one-hot encoded output vector $(o_0, o_1)$. This output vector can be normalised \footnote{In contrast to a standard technique in machine learning, it is not advisable to use a softmax layer for this purpose, since $o_0, o_1$ can be very small, which leads to almost uniform probabilities.} to a new vector
\[   \frac{1}{o_0 + o_1} \begin{pmatrix} o_0\\ o_1\end{pmatrix} = \begin{pmatrix} p(y=0)\\ p(y=1)\end{pmatrix},\]
where $p(y=0)$, $p(y=1)$ can now be interpreted as the probability for the model to predict class $y=0$ and $y=1$, respectively. The final label is the class with the higher probability. We can interpret this circuit in the graphical representation of neural networks as shown at the top in Figure \ref{Fig:graphicandcircuit}.   \\

\begin{figure}[t]
\centering
\includegraphics[width = 0.45\textwidth]{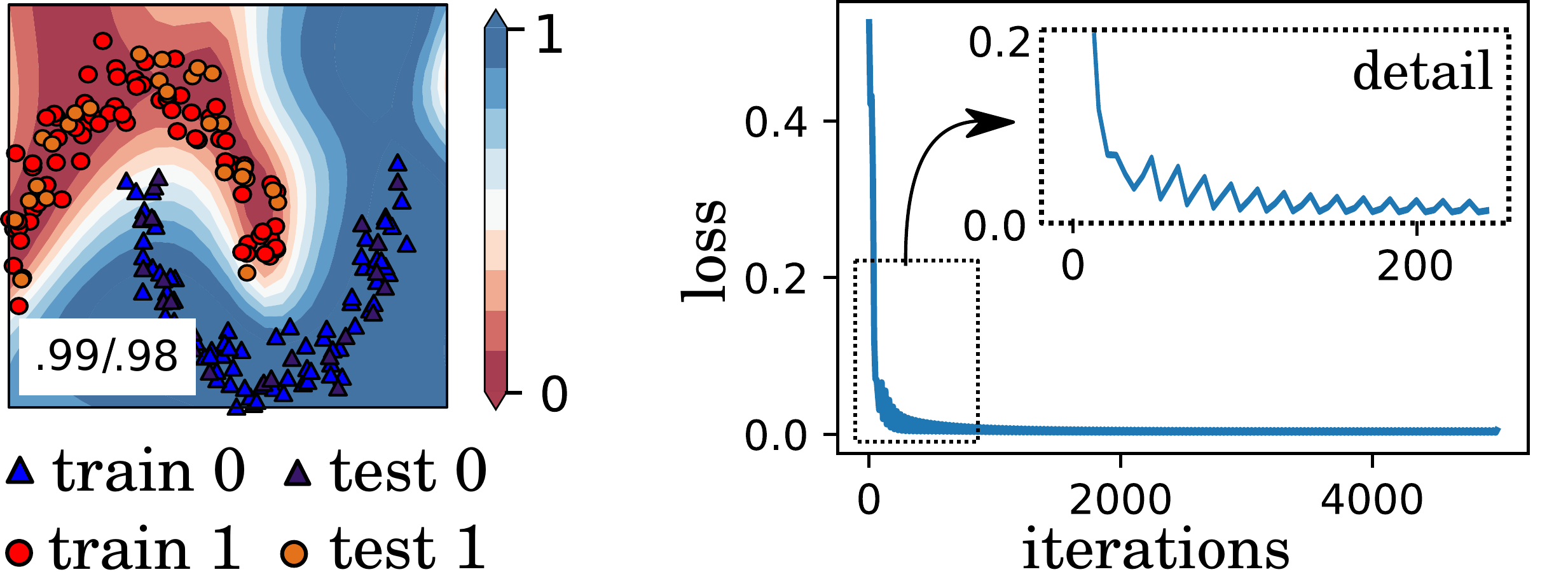}
\caption{Fock space classifier presented in Figure \ref{Fig:graphicandcircuit} and the text for the `moons' dataset. The shaded areas show the probability $p(y=1)$ of predicting class $1$. The datasets consist of $150$ training and $50$ test samples, and has been trained for $5000$ steps with stochastic gradient descent of batch-size $5$, an adaptive learning rate and a square-loss cost function with a gentle $l_2$ regularisation applied to all weights. The loss drops predominantly in the first $200$ steps (left).   }
\label{Fig:fockdb}
\end{figure}

Let us assume we could represent any possible quantum circuit in the feature Hilbert space with the circuit $W(\theta)$. Since the data in $\mathcal{F}$ is linearly separable, there is a $W$ for which we obtain $100\%$ accuracy on the training set, as we saw in Figure \ref{Fig:sq_perc}. However, the goal of machine learning is not to perfectly fit data, but to generalise from it. It is therefore not desirable to find the optimal decision boundary for the training data in $\mathcal{F}$, but to find a good candidate from a class of decision boundaries that captures the structure in the data well.  Such a restricted class of decision boundaries can be defined by using an ansatz for the model circuit $W(\theta)$ which cannot represent \textit{any} circuit, yet still flexible enough to reach interesting candidates. Figure \ref{Fig:graphicandcircuit} c.) shows such a model circuit for the $2$ input modes in our continuous-variable example. The architecture consists of repetitions of a general gate block. We denote by $\hat{a}_{1,2}, \hat{a}^{\dagger}_{1,2}$  the creation and annihilation operators of mode $1$ and $2$, and with $\hat{x}_{1,2}, \hat{p}_{1,2}$ the corresponding quadrature operators (see \cite{weedbrook12}). After an entangling beam splitter gate,
\[BS(u, v) = \e^{u (\e^{iv} \hat{a}_1^{\dagger}\hat{a}_2 -\e^{-iv} \hat{a}_1\hat{a}^{\dagger}_2}), \]
with $u, v \in \mathbb{R}$, the circuit consists of single-mode gates that are first, second and third order in the quadratures. The first-order gate is implemented by a displacement gate
\[D(z) = \e^{\sqrt{2}i(\mathrm{Im}(z) \hat{x} - \mathrm{Re}(z)\hat{p})},\]
with the complex displacement factor $z$. We use a quadratic phase gate for the second order,
\[P(u) = \e^{i\frac{u}{2} \hat{x}^2}, \]
and a cubic phase gate for the third order operator,
\[V(u) = \e^{i\frac{u}{3} \hat{x}^3}.\]
We can in principle construct any continuous-variable quantum circuit from this gate set. This basic circuit block can easily be generalised to circuits of more modes by replacing the single beam splitter by a full optical network of beam splitters \cite{flamini17}. \\

To show that the Fock space classifier works, we plot the decision boundary for the `moons' data in Figure \ref{Fig:fockdb}, using $4$ repetitions of the gate block from Figure \ref{Fig:graphicandcircuit} c.) and $32$ parameters in total. The training loss shows that after about $200$ iterations of a stochastic gradient descent algorithm, the loss converges to almost zero.

\section{Conclusion}

In this paper we introduced a number of new ideas for the area of quantum machine learning based on the theory of feature spaces and kernels. Interpreting the encoding of inputs into quantum states as a feature map, we associate a quantum Hilbert space with a feature space. Inner products of quantum states in this feature space can be used to evaluate a kernel function. We can alternatively train a variational quantum circuit as an explicit classifier in feature space to learn a decision boundary. We introduced a squeezing feature map as an example and motivated with small-scale simulations that these two approach can lead to interesting results. \\

From this work there are many further avenues of research. For example, we raised the question whether there are interesting kernel functions that can be computed by estimating the inner products of quantum states, for which state preparation is classically intractable. Another open question are the details in the design and training of variational circuits, and how learning algorithms can be tailormade for the use in hybrid training schemes. This is a topic that has just begun to be investigated by the quantum machine learning community \cite{verdon17, farhi18}. \\  

Last but not least, we want to come back to a point we raised in the introduction. In quantum machine learning, a lot of models use amplitude encoding, which means that a data vector is represented by the amplitudes of a quantum state. Especially when trying to reproduce neural network-like dynamics one would like to perform nonlinear transformations on the data. But while linear transformations are natural for quantum theory, nonlinearities are difficult to design in this context. Interesting workarounds based on postselection or repeat-until-success circuits were proposed in \cite{wiebe15c,guerreschi17}, but at the considerable costs of making the circuit non-deterministic, and with a probability of failure that grows with the size of the architecture. The feature map approach `outsources' the nonlinearity into the procedure of encoding inputs into a quantum state and therefore offers an elegant solution to the problem of nonlinearities in amplitude encoding. 

\bibliography{litArchive}

\appendix

\section{Reproducing kernels of quantum systems}\label{App:quantumrkhs}

In this section of the appendix we will try to find an answer to the question of which reproducing kernels the Hilbert space of generic quantum systems gives rise to. Quantum theory prescribes that the state of a quantum system is modelled by a vector in a Hilbert space $\mathcal{H}_s$. In a typical setting, the Hilbert space is constructed from a complete basis of eigenvectors $\{\ket{s}\}$ of a complete set of commuting Hermitian operators which corresponds to physical observables. Due to the hermiticity of the observables, the basis is orthogonal, and it can be continuous (i.e., if the observable is the position operator describing the location of a particle), countably infinite (i.e., observing the number of photons in an electric field), or finite (i.e., observing the spin of an electron). Vectors in the Hilbert space are abstractly referred to as $\ket{\psi} \in \mathcal{H}$ in Dirac notation. However, every such Hilbert space has a \textit{functional representation}. In the case of a discrete basis of dimension $N \in \mathbb{N} \cup \infty$, the functional representation $\mathcal{H}^f_s$ of $\mathcal{H}_s$ is given by the (Hilbert) space $l^2$ of square summable sequences $\{ \psi(s_i) =\braket{s_i}{\psi} \}_{i=1}^N $ with the inner product $\langle \psi, \varphi \rangle = \sum_{s_i} \psi(s_i)^* \varphi(s_i) $. In the continuous case this is the space $L^2$ of square summable (equivalence classes of) functions $\psi(s) = \braket{s}{\psi}$ with the inner product $\langle \psi, \varphi \rangle = \int ds \psi(s)^*\varphi(s)  $. The preceding formulation of quantum theory therefore associates every quantum system with a Hilbert space of functions mapping from a set $\mathcal{S}=\{s\}$ to the complex numbers. The question is if these Hilbert spaces give rise to a reproducing kernel that makes them a RKHS with respect to the input set $\mathcal{S}$. \\   

With the resolution of identity $\mathbbm{1} = \int ds\; \ketbra{s}{s} $ for the continuous and $\mathbbm{1} = \sum_i \ketbra{s_i}{s_i}$ for the discrete case, we can immediately ``create'' the reproducing property from Eq. (\ref{Eq:repro}). Consider first the discrete case:
\[ \psi(s_i) = \braket{s_i}{\psi} =  \sum_{s_j} \braket{s_i}{s_j} \braket{s_j}{\psi} = \braket{\braket{s}{\cdot}}{\psi(\cdot) }. \]
We can identify $\braket{s_i}{s_j} $ with the reproducing kernel. Since the basis is orthonormal, we have $\kappa(s_i, s_j) = \delta_{i,j}$. The continuous case is more subtle. Inserting the identity, we get 
\[\psi(s) = \int ds' \braket{s}{s'} \braket{s'}{\psi} = \langle \braket{s}{\cdot}, \psi(\cdot) \rangle , \]
which is the reproducing kernel property with the reproducing kernel $\kappa(s,s') = \braket{s}{s'} $. However, the ``function'' $s'(s) =  \delta(s-s')$ is not square integrable, which means it is itself not part of $\mathcal{H}^f_s$, and the properties of Definition \ref{Def:rkhs} are not fulfilled. This is no surprise, as the space of square integrable functions $L^2$ is a frequent example of a Hilbert space that is not a RKHS \cite{berlinet11}. The inconsistency between Dirac's formalism and functional analysis is also a well-known issue in quantum theory, but usually glossed over in physical contexts \cite{griffiths08}. If mathematical rigour is needed, physicists usually refer to the theory of rigged Hilbert spaces \cite{madrid05}.\\

There are quantum systems with an infinite basis which naturally give rise to a reproducing kernel that is not the delta function. These systems are described by so-called \textit{generalised coherent states} \cite{klauder85}. In the context of quantum machine learning, this has been discussed in Ref. \cite{chatterjee16}. Generalised coherent states are vectors $\ket{l}$ in a Hilbert space $\mathcal{H}_c$ of finite or countably infinite dimension, and where the index $l$ is from some topological space $\mathcal{L}$ (allowing us to define a norm $|| \ket{l} || = \sqrt{\braket{l}{l}}$). They have two fundamental properties. First, $\ket{l}$ is a strongly continuous function of $l$,
\[ \lim_{l\rightarrow l'} ||\; \ket{l'} - \ket{l} || = 0, \; \ket{l} \neq 0.\]
Note that this excludes for example the discrete Fock basis $\{\ket{n}\}$, but also any orthonormal set of states $\{\ket{z}\}$ with a continuous label $z\in \mathbb{C}$, since $\frac{1}{2}|| \ket{z'} - \ket{z} || = 1$ for $z' \neq z$. Second, there exists a measure $\mu$ on $\mathcal{L}$ so that we have a resolution of identity
$ \mathbbm{1} = \int_{\mathcal{L}} \ketbra{l}{l} \; d\mu(l)$. 
This leads to a functional representation of the Hilbert space where a vector $\ket{\psi}\in \mathcal{H}_c$ is expressed via $\ket{\psi} = \sum_l \psi(l) \ket{l}$ with $\psi(l) = \braket{l}{\psi}$. Inserting the resolution of identity to the right hand side of this expression yields
\[\psi(l) =  \int_{\mathcal{L}} \braket{l}{l'}  \braket{l'}{\psi}\; d\mu(l'), \]
which is exactly the reproducing property in Definition \ref{Def:rkhs} with the reproducing kernel $\kappa(l, l') = \braket{l}{l'}$. Since there is a finite overlap between any two states from the basis, the kernel is not the Dirac delta function, and we do not run into the same problem as for continuous orthogonal bases. Hence, the Hilbert space of coherent states is an RKHS for the input set $\{l\}$.\\

The most well-known type of coherent state are optical coherent states
\[\ket{\alpha} = e^{-{|\alpha|^2\over2}}\sum_{n=0}^{\infty}{\alpha^n\over\sqrt{n!}}|n\rangle, \] 
which are the eigenstates of the non-Hermitian bosonic creation operator $\hat{a}$,  with the associated kernel
\begin{equation}
\kappa(\alpha, \beta) = \braket{\alpha}{\beta} = \e^{- \left( \frac{ |\alpha|^2}{2} +\frac{|\beta|^2}{2} - \alpha \beta   \right)},\label{Eq:gaussian} \end{equation}
whose square is a \textit{radial basis function} or \textit{Gaussian kernel} as remarked in \cite{chatterjee16}.

\section{Linear separability in Fock space}\label{App:linsep}

If we map the inputs of a dataset $\mathcal{D}$ to a new dataset 
\[\mathcal{D}' = \{ \ket{(c, \x^1)} ,..., \ket{(c, \x^M)} \},\] 
using the squeezing feature map with phase encoding from Eq. (\ref{Eq:squeezingmap}), the feature mapped data vectors in $\mathcal{D}'$ are always linearly separable, which means any assignment of two classes of labels to the data can be separated by a hyperplane in $\mathcal{F}$ (see Figure \ref{Fig:linear}). To show this, first consider the following:
\begin{prop}\label{Prop:linsep_linid} A set of $M$ vectors in $\mathbb{R}^N$ are linearly separable if $M-1$ of them are linear independent. 
\end{prop}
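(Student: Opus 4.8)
The plan is to take ``linearly separable'' in the (shattering) sense used in Appendix \ref{App:linsep}: the set $\{v_1,\dots,v_M\}\subset\mathbb{R}^N$ is linearly separable if for \emph{every} label vector $y\in\{+1,-1\}^M$ there are $\w\in\mathbb{R}^N$ and $b\in\mathbb{R}$ with $\mathrm{sign}(\langle\w,v_i\rangle-b)=y_i$ for all $i$. First I would remove the bias with the usual lift $v_i\mapsto\tilde v_i:=(v_i,1)\in\mathbb{R}^{N+1}$, so that it suffices to produce, for each $y$, a $\tilde\w\in\mathbb{R}^{N+1}$ with $\mathrm{sign}(\langle\tilde\w,\tilde v_i\rangle)=y_i$. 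Relabel so that $v_1,\dots,v_{M-1}$ are the linearly independent ones (hence $M-1\le N$); then $\tilde v_1,\dots,\tilde v_{M-1}$ are also linearly independent in $\mathbb{R}^{N+1}$, since any linear relation among the $\tilde v_j$ would project to one among the $v_j$.

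Because $\tilde v_1,\dots,\tilde v_{M-1}$ are independent, for any reals $\lambda_1,\dots,\lambda_{M-1}$ with $\mathrm{sign}(\lambda_j)=y_j$ there is a $\tilde\w$ solving $\langle\tilde\w,\tilde v_j\rangle=\lambda_j$ for $j<M$, and the solution set is an affine subspace of dimension $(N+1)-(M-1)\ge1$. On that subspace $\tilde\w\mapsto\langle\tilde\w,\tilde v_M\rangle$ is affine. If $\tilde v_M\notin\mathrm{span}(\tilde v_1,\dots,\tilde v_{M-1})$ it is non-constant there, so I can move $\tilde\w$ inside the subspace until $\langle\tilde\w,\tilde v_M\rangle$ acquires sign $y_M$ without disturbing the other signs; equivalently, in this case $\tilde v_1,\dots,\tilde v_M$ are already independent, so one may simply solve $\langle\tilde\w,\tilde v_i\rangle=y_i$ exactly for all $i$. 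This handles the generic situation.

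The remaining case --- and the one I expect to be the real obstacle --- is $\tilde v_M\in\mathrm{span}(\tilde v_1,\dots,\tilde v_{M-1})$, i.e.\ $v_M=\sum_{j<M}a_jv_j$ with $\sum_j a_j=1$. Then $\langle\tilde\w,\tilde v_M\rangle=\sum_{j<M}a_j\lambda_j$ is forced by the choice of the $\lambda_j$, and writing $\lambda_j=y_ju_j$ with $u_j>0$ one needs $u_j>0$ making $\sum_j(a_jy_j)u_j$ have sign $y_M$, which succeeds precisely when the scalars $a_jy_j$ are not all of one sign. So the bare hypothesis does not quite suffice here (a small counterexample: $v_1=(1,0)$, $v_2=(0,1)$, $v_3=\tfrac12(v_1+v_2)$ with labels $+,+,-$ admits no separating hyperplane), and I would close the proof by bringing in extra structure rather than pure linear algebra: either a general-position assumption on the $v_i$ that excludes such affine combinations, or --- as is the case for the squeezing feature map of Section~\ref{linsep}, whose feature vectors for distinct inputs are mutually linearly independent and all share the same nonzero vacuum overlap --- the stronger fact that no $v_i$ lies in the span of the others, which collapses this case into the generic one treated above. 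I would therefore organise it as ``dimension count for the generic case, structural input for the degenerate case,'' and explicitly flag the affine-combination configuration as the delicate point.
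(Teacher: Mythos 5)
Your argument for the main case is essentially the paper's own (Appendix \ref{App:proof_prop1}): absorb the bias by appending a constant coordinate, impose the labels as \emph{exact} interpolation constraints, and reduce separability to solvability of a linear system, i.e.\ a rank/dimension count for the lifted vectors $(x^m,1)$. Where you go beyond the paper is the degenerate case, and you are right that this is where the statement itself breaks. If the one dependent vector is an \emph{affine} combination of the independent ones (coefficients summing to one), the lifted vectors become dependent, the value $\langle \tilde w,\tilde v_M\rangle$ is forced by the other constraints, and your counterexample $v_1=(1,0)$, $v_2=(0,1)$, $v_3=\tfrac12(v_1+v_2)$ with labels $+,+,-$ admits no separating hyperplane even though $M-1=2$ of the three vectors are linearly independent (and the same configuration can be embedded with $M\leq N$, so it is not cured by dimension counting). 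Hence Proposition \ref{Prop:linsep_linid} is false as stated, and the paper's proof slips exactly where you flag it: the claim that after adding one dependent point ``the row number equals the column number in $[X|1]$, and adding more columns does not change the rank'' tacitly assumes the square matrix $[X|1]$ is nonsingular, which fails precisely in the affine-combination configuration. The correct hypothesis is that the lifted vectors $(x^m,1)$ are linearly independent --- equivalently, that the points are affinely independent, e.g.\ $M-1$ of them linearly independent and the remaining one not an affine combination of those --- and in particular it suffices that all $M$ vectors are linearly independent, since a linear relation among the $(x^m,1)$ projects to one among the $x^m$. As you observe, this repair leaves the paper's application intact: Proposition \ref{Prop:squ_li} establishes linear independence of \emph{all} $M$ feature vectors, so only the nondegenerate case is ever invoked and linear separability in Fock space still follows.
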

The proof can be found in Appendix \ref{App:proof_prop1}. Proposition \ref{Prop:linsep_linid} tells us that if our data is linearly independent, it is linearly separable. This result is in fact known from statistical learning theory: The VC dimension -- a measure of flexibility or expressive power -- of linear models in $K$ dimensions is $K+1$, which means that a linear model can separate or ``shatter'' $K+1$ data points if we can choose the strategy of how to arrange them, but not the strategy of how they are labelled.\\

If we can show that the squeezing feature map maps vectors to linearly independent states in Fock space, we know that any dataset becomes linearly separable in Fock space. To simplify, lets first see look at the squeezing map of a single mode.
\begin{prop}\label{Prop:squ_li} Given a set of squeezing phases $\{\varphi^1,...,\varphi^M\}$  with $\varphi^m \neq \varphi^{m'}$ for $m = 1,...,M, m \neq m' $ and a hyperparameter $c \in \mathbb{R}$, the squeezed vacuum Fock states $\ket{(c,\varphi^1)},..., \ket{(c, \varphi^M)}$ are linearly independent. \end{prop}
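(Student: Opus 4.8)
The plan is to show linear independence by exhibiting the squeezed vacuum states with distinct phases as images, under a fixed invertible map, of vectors that are manifestly linearly independent. Recall from the definition in the text that
\[
\ket{(c,\varphi)} = \frac{1}{\sqrt{\cosh c}}\sum_{n=0}^{\infty} \frac{\sqrt{(2n)!}}{2^n n!}\bigl(-\e^{i\varphi}\tanh c\bigr)^n \ket{2n}.
\]
The key observation is that, up to the overall constant $1/\sqrt{\cosh c}$ (which is nonzero and phase-independent) and the fixed positive coefficients $a_n := \sqrt{(2n)!}/(2^n n!)\,(-\tanh c)^n$ (all nonzero since $c \neq 0$; if $c=0$ the states all collapse to the vacuum, so we may assume $c\neq 0$ and note the proposition is vacuous otherwise or the hypothesis forces $M=1$), the state $\ket{(c,\varphi)}$ is the vector whose component along $\ket{2n}$ is $a_n\,(\e^{i\varphi})^n$. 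In other words, setting $t_m := \e^{i\varphi^m}$, the matrix of coordinates of $\ket{(c,\varphi^1)},\dots,\ket{(c,\varphi^M)}$ in the basis $\{\ket{0},\ket{2},\ket{4},\dots\}$ is, after removing the nonzero row-scalings $a_n$, an infinite Vandermonde-type matrix with rows indexed by $n = 0,1,2,\dots$ and entry $t_m^{\,n}$.

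The concrete steps are as follows. First, suppose $\sum_{m=1}^M \lambda_m \ket{(c,\varphi^m)} = 0$ for scalars $\lambda_m \in \mathbb{C}$. Projecting onto $\ket{2n}$ for each $n = 0,1,\dots,M-1$ and dividing by the common nonzero factor $a_n/\sqrt{\cosh c}$ yields the finite linear system $\sum_{m=1}^M \lambda_m t_m^{\,n} = 0$ for $n = 0,\dots,M-1$. Second, this is exactly the statement that the vector $(\lambda_1,\dots,\lambda_M)^T$ lies in the kernel of the $M\times M$ Vandermonde matrix $V$ with $V_{n,m} = t_m^{\,n}$. Third, since the phases $\varphi^m$ are pairwise distinct and lie in a range where $\varphi \mapsto \e^{i\varphi}$ is injective (one should state this as part of the hypothesis, or restrict $\varphi^m \in [0,2\pi)$), the points $t_1,\dots,t_M$ are pairwise distinct, so $\det V = \prod_{m<m'}(t_{m'}-t_m) \neq 0$. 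Hence $\lambda_1 = \dots = \lambda_M = 0$, establishing linear independence.

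I do not expect a serious obstacle here; the only subtlety worth flagging is the degenerate regime. If $c = 0$ then $\tanh c = 0$ and every $\ket{(c,\varphi^m)}$ equals the vacuum $\ket{0}$, so the states are linearly independent only when $M = 1$; the proposition implicitly assumes $c\neq 0$ (genuine squeezing), and I would add that assumption explicitly. A second minor point is ensuring $\varphi \mapsto \e^{i\varphi}$ is injective on the set of phases considered, so that distinct phases give distinct Vandermonde nodes; this is automatic if the $\varphi^m$ are taken modulo $2\pi$, which is the natural convention for a phase. With these two caveats handled, the argument is just the nonvanishing of a Vandermonde determinant applied to the first $M$ even Fock components.
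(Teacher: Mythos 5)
Your proposal is correct and uses essentially the same key idea as the paper's proof: the coefficients of the squeezed states in the (even) Fock basis form, after stripping the nonzero $n$-dependent and $\varphi$-independent factors, a Vandermonde matrix in the nodes $\e^{i\varphi^m}\tanh c$, whose determinant is nonzero for pairwise distinct phases. Your variant of projecting onto the first $M$ even Fock components to obtain a genuinely finite $M\times M$ Vandermonde system is a slightly cleaner way to handle the infinite-dimensional setting than the paper's infinite-matrix formulation, and your explicit caveats about $c\neq 0$ and the injectivity of $\varphi\mapsto\e^{i\varphi}$ modulo $2\pi$ are reasonable refinements of assumptions the paper leaves implicit.
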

The proof is found in Appendix \ref{App:proof_prop2}. A very similar proof confirms that the proposition also holds true for the sueezing map with absolute value encoding described in Section \ref{linsep}. Symbolic computation of the rank of the design matrix in feature space in Mathematica confirms this result for randomly selected squeezing factors up to $M=10$ and a cutoff dimension that truncates Fock space to $40$ dimensions. \\

For the multimode feature map dealing with input data of dimension higher than one, 
\[\ket{(c, \boldsymbol\varphi^{m})} = \ket{(c,\varphi^{m}_1)} \otimes \hdots \otimes \ket{(c, \varphi^{m}))},\]
and
\[\ket{(c, \boldsymbol\varphi^{m'})} = \ket{(c,\varphi^{m'}_1)} \otimes \hdots \otimes \ket{(c, \varphi^{m'}))}.\]
We have 
\[\braket{(c, \boldsymbol\varphi^{m})}{(c, \boldsymbol\varphi^{m'})} = \prod \limits_{i=1}^N \braket{(c, \varphi_i^{m})}{(c, \varphi_i^{m'})} ,\]
which is $1$ if $\varphi_i^{m} =\varphi_i^{m'} $ for all $i=1,...,N$ and a value other than zero else. The linear independence therefore carries over to multi-dimensional feature maps.

\section{Proof of Proposition \ref{Prop:linsep_linid}}\label{App:proof_prop1}

Let 
\[\mathcal{D} = \{ (x^1,y^1),\cdots, (x^{M},y^M) \}\]
be a dataset of $M$ vectors with $x^m \in \mathbb{R}^N$ for all $m=1,\cdots, M$, and $y \in \{-1,1\}$. The vectors are guaranteed to be linearly separable if  for any assignment of classes $\{-1,1\}$ to labels $y^1,...,y^{M}$ there is a hyperplane defined by parameters $w_1,...,w_N, b$ so that
\begin{equation}
 \mathrm{sgn} (\sum\limits_{i=1}^N  w_i x^m_i +b) = y^m  \;\;  \forall m=1,...,M. 
\label{Eq:cond} \end{equation}
The sign function is a bit tricky, but if we can instead show that the stronger condition
\begin{equation} \sum\limits_{i=1}^N  w_i x^m_i +b = y^m  \;\;  \forall m=1,...,M  
\label{Eq:cond2} \end{equation}
holds for some parameters,  Eq. \ref{Eq:cond} must automatically be satisfied.\\

Equation \ref{Eq:cond2} defines a system of $M$ linear equations with $N+1$ unknowns (namely the variables).
%\begin{eqnarray*}
%   w_1 x^1_1 + \cdots + w_N x^1_N +  b &=& y^1\\
%   &\vdots& \\
%   w_1 x^M_1 + \cdots + w_N x^M_N +  b &=& y^M\\
%\end{eqnarray*}
From the theorey of linear algebra we know \cite{hogben06} that there is at least one solution if and only if the rank of the `coefficient matrix' 
\[ [X|1] = \begin{pmatrix}
x_1^1 & \cdots & x_N^1 & 1\\
\vdots & \ddots & \vdots & \vdots\\
x_1^M & \cdots & x_N^M & 1 \\
\end{pmatrix} \]
is equal to the rank of its augmented matrix
\[ [X|1|y] = \begin{pmatrix}
x_1^1 & \cdots & x_N^1 & 1 & y^1\\
\vdots & \ddots & \vdots & \vdots & \vdots\\
x_1^M & \cdots & x_N^M & 1 & y^M\\
\end{pmatrix}. \] 
Remember that the rank of a matrix is the number of linearly independent row (and column) vectors. \\  

If the data vectors are all linearly independent we have that $N\geq M$ (if $N<M$ there would be some vectors that depend on others, because we have more vectors than dimensions), and the rank of $X$ is $\min (M,N)=M$. Augmenting $X$ by stacking any number of column vectors simply increases $N$, which means that it does not change the rank of the matrix. It follows that for $M$ linearly independent data points embedded in a $N$ dimensional space the system has a solution. The data is therefore linearly separable.\\

With this argument we can add more vectors that are linearly dependent until $M=N$. After this, we can in fact add one (but only one) more data point that linearly depends on the others, and still guarantee linear separability. That is because adding one data point makes the row number equal to the column number in $[X|1]$, and adding more columns does not change the rank. In contrast, adding \textit{two} data points means that we have more columns than rows in $[X|1]$, and adding the column for $[X|1|y]$ can indeed change the rank.

\section{Proof of Proposition \ref{Prop:squ_li}} \label{App:proof_prop2}
Let's consider a matrix $M$ where the squeezed states in Fock basis form the rows:
\[ M_{jn}:= \frac{1}{\sqrt{ \mathrm{cosh}(r_j)}} \left( - \e^{i\phi_j} \tanh(r_j)\right)^n \frac{ \sqrt{(2n)!}}{2^n \; n!} \]
We introduce two auxiliary diagonal matrices:
\[D_1:= \mathrm{diag} \{\sqrt{ \cosh(r_j)} \} \]
\[D_2 := \mathrm{diag}\left\{ \frac{n!}{\sqrt{(2n)!}} \right\} \]
Multiplying, we find that the matrix $V:=D_1MD_2$ has matrix elements 
\[V_{jn} = \left(-\frac{1}{2} \e^{i \phi_j} \tanh(r_j)\right)^n. \]
Importantly, $V$ has the structure of a Vandermonde matrix. In particular,  it has determinant 
\[ \det(V) = \frac{1}{2}  \prod_{1\leq i<j\leq n} \left(- \e^{i\phi_i} \tanh(r_i) + \e^{i\phi_j} \tanh(r_j)\right).\]
The only way that $\det(V) = 0$ is if 
\[ \e^{i(\phi_i-\phi_j)} \tanh(r_i) = \tanh(r_j)\]
for some $i=j$. The squeezing feature map with phase encoding prescribes that $r_i = r_j = c$ (and we assume that $c>0$). Thus,  the  only  solution  to  the  above  equation  is  when $\varphi_i=\varphi_j$, which can only be true if the two feature vectors describe the same datapoint, which we excluded in Proposition \ref{Prop:squ_li}. Thus,  $\det(V)>0$,  which  means  that $\det(M)>0$,  and  hence $M$ is  full  rank.   This  means  that  the  columns  of $M$, which are our feature vectors, are linearly independent. Note that the same proof also prescribes that squeezing feature maps with absolute value encoding makes distinct data points linearly independent in Fock space.

%\begin{figure}[t]
%
%$$\qquad
%\Qcircuit @C=1em @R=.7em {
%\lstick{\ket{0,x_1}}   & \multigate{1}{\tau_{1}} & \qw  & \multigate{1}{\tau_{4}} &  \qw &\measureD{} \\
%\lstick{\ket{0,r_1}} &  \ghost{\tau_{1}} & \multigate{1}{\tau_{3}} & \ghost{\tau_{4}} & \multigate{1}{\tau_{6}} &\measureD{p(n_2)} \\
%\lstick{\ket{0,x_2}}  & \multigate{1}{\tau_{2}} & \ghost{\tau_{3}}&  \multigate{1}{\tau_{5}}&\ghost{\tau_{6}}    &\measureD{} \\
%\lstick{\ket{0,r_2}} & \ghost{\tau_{5}} &  \qw  & \ghost{\tau_{5}} & \qw &\measureD{p(n_4)} \\
%}
%$$
%\label{Fig:circuit2}
%\caption{Proposal for a model circuit explored in the simulations. }
%\end{figure}

\end{document}